\newtheorem{problem}{Problem}
\newtheorem{lemma}{Lemma}
\newtheorem{theorem}{Theorem}
\newtheorem{remark}{Remark}
\newtheorem{definition}{Definition}
\newtheorem{example}{Example}
\newcommand{\continuation}{??}
\newenvironment{continueexample}[1]
 {\renewcommand{\continuation}{\ref{#1}}\excont[\textit{Cont'd}]}
 {\endexcont}
\newcommand{\xs}{\boldsymbol{\mathrm{x}}}
\newcommand{\ys}{\boldsymbol{\mathrm{y}}}
\title{\LARGE \bf Control Barrier Functions with Actuation Constraints under Signal Temporal Logic Specifications}
\author{Ali Tevfik Buyukkocak, Derya Aksaray, and Yasin Yaz{\i}c{\i}o\u{g}lu
\thanks{A.T. Buyukkocak and D. Aksaray are with the Department of Aerospace Engineering and Mechanics, University of Minnesota, Minneapolis, MN, 55455, {\tt\small buyuk012@umn.edu, daksaray@umn.edu}, and Y. Yaz{\i}c{\i}o\u{g}lu is with the Department of Electrical and Computer Engineering, University of Minnesota, Minneapolis, MN, 55455, {\tt\small ayasin@umn.edu}
This work was \textcolor{black}{partially} supported by a MnDRIVE Graduate Research Fellowship from the University of Minnesota.}}
\begin{document}
\bibliographystyle{IEEEtran}
\maketitle
\thispagestyle{empty}
\pagestyle{empty}

\begin{abstract} \label{abstract}
We propose control barrier functions (CBFs) for a family of dynamical systems to satisfy a broad fragment of Signal Temporal Logic (STL) specifications, which may include subtasks with nested temporal operators or conflicting requirements (e.g., achieving multiple subtasks within the same time interval). The proposed CBFs take into account the actuation limits of the dynamical system as well as a feasible sequence of subtasks, and they define time-varying feasible sets of states the system must always stay inside. We show some theoretical results on the correctness of the proposed method. We illustrate the benefits of the proposed CBFs and compare their performance with the existing methods via simulations.

\end{abstract}


\vspace{-1mm}
\section{Introduction} \label{sec:introduction}
Motion planning and control of dynamical systems often require the satisfaction of complex high-level specifications. Temporal logics \cite{baier2008} such as Linear Temporal Logic (LTL) \cite{pnueli1977temporal} can express such specifications and have been extensively used in the planning and control of autonomous robots (e.g., \cite{kress2009temporal,karaman,aksaray2015}). Signal Temporal Logic (STL) \cite{maler} is an expressive specification language that can define properties of dense-time real-valued signals with explicit spatial and time parameters. Control synthesis under the STL constraints has been broadly studied, and various methods have been proposed such as mixed-integer encoding of specifications (e.g., \cite{raman,buyukkocak2021acc}) and nonlinear program solutions using smooth approximations of STL robustness metrics (e.g., \cite{pant2017smooth,mehdipour2020specifying}). Alternatively, more time-efficient methods such as barrier functions are also proposed to achieve such specifications.

The authors of \cite{ames2016control} benefit from control barrier functions (CBFs) to avoid excessive reachability analysis and provide tractable feedback control laws that enable dynamical systems to stay in ``safe sets" \textcolor{black}{of states}. Moreover, they use maximal control input policies in CBFs to define such sets. Control-dependent CBFs for the sets changing with inputs are considered in \cite{ames2020integral,huang2019guaranteed}. Similarly, the forward invariance is established for time-varying sets via CBFs in \cite{xu2018constrained}. Finite-time convergent CBFs \cite{li2018formally} are also utilized to reach target sets before a preset deadline. Therefore, besides the safety critical systems, CBFs can also be used to achieve rich spatial and temporal tasks in a computationally efficient manner.

\vspace{-1mm}
\subsection{Related Work}\label{sec:related_work}
Recently, there has been a great interest in exploring the use of CBFs to satisfy temporal logic subtasks that together form the an overall specification. For example, in \cite{srinivasan2020control}, an LTL fragment is considered and a CBF is defined for each subtask while the order of the subtasks is assigned randomly. In \cite{niu2020control}, arbitrary time windows are assigned for each LTL subtask in order to avoid conflicts. That is being said, LTL specifications do not include explicit time consideration. On the other hand, specifications with explicit time windows can be expressed by STL. For instance, the authors of \cite{lindemann2018control} used time-varying CBFs to ensure the satisfaction of a family of STL specifications without considering any actuation limit. In particular, all the unsatisfied subtasks are kept active throughout the mission, and their method ensures continuous progress to the satisfaction of all subtasks. However, if two subtasks have conflicting requirements (e.g., visiting two distinct regions within overlapping time windows), their problem becomes infeasible as continuous progress to the satisfaction of these conflicting subtasks becomes infeasible. In \cite{charitidou2021barrier}, constraints on the inputs and states are considered in a receding horizon control scheme to ensure that the environment limits are not violated. However, these constraints are not explicitly used in CBF formulations, and the incapability to define the conflicting requirements remains unchanged. In \cite{gundana2021event}, an event-triggered STL framework is proposed via automata-based planning to activate CBF constraints. In the presence of the aforementioned conflicts, they notify the user and cancel the process. 

The finite and fixed time convergent CBFs are used in \cite{xiao2021high} and \cite{garg2019control} to reach target sets. While in \cite{xiao2021high} the CBFs associated with the subtasks are activated when they appear in the receding horizon, in \cite{garg2019control}, the STL subtasks at different regions are defined over distinct time windows. However, these methods do not suffice for conflicting subtasks since they may need to be active or defined at the same time. Moreover, these approaches do not allow task combinations like go to region A and/or B within \textcolor{black}{the same} time window.

The aforementioned studies generally require a continuous progress toward the target sets associated with the subtasks. Therefore, they are potentially prone to infeasibility in the presence of conflicting requirements 
(e.g., visiting two different regions repetitively). A common way to circumvent this is relaxing the CBF constraints (e.g., \cite{srinivasan2020control,xiao2021high,charitidou2021barrier}). However, this often yields the missing the deadline for one or more relaxed subtasks in conflict, and hence the violation of the original specification.



\vspace{-2mm}
\subsection{Contributions}\label{sec:contributions}
We introduce a dual CBF formulation that can accommodate a rich fragment of STL including, but not limited to, nested temporal operators and specifications requiring to complete multiple distinct subtasks in overlapping time windows (e.g., eventually $x\geq1$ and $x\leq-1$ within $t \in [0,5]$ with $x_0=0$). In the proposed CBF formulation, the actuation limits, as well as a feasible sequence of subtasks, are considered to define the largest time varying sets of states that we want our system to stay inside. This does not prohibit the system to move away from the active target areas, unlike the continuous progress toward them which is often required in the related work. Instead of considering each subtask one by one (or when they appear in a receding horizon) we determine a feasible sequence of subtasks whose achievement yields the satisfaction of the STL specification. 
\vspace{-5mm}
\section{Preliminaries}\label{sec:preliminaries}
In this paper, $\mathbb{R}_{\ge0}$ refers to the set of nonnegative real numbers, and $\mathbb{R}^n$ denotes the set of $n$-dimensional real-valued vectors. We represent the set cardinality and the Euclidean norm of a vector by $|\cdot|$ and $\|\cdot\|$, respectively. The relations between vectors (e.g., $\geq,\leq$) are implemented elementwise. The class $\mathcal{K}$ functions, $\alpha:\mathbb{R}_{\geq0}\to\mathbb{R}_{\geq0}$, are strictly increasing, i.e., $\alpha(x)>\alpha(y)$ if $x>y$, with $\alpha(0)=0$. A function $f: \mathcal{X} \subseteq \mathbb{R}^n \to \mathbb{R}^n$ is locally Lipschitz continuous at $\mathcal{X}$ if every point on $\mathcal{X}$ has a neighborhood $\mathcal{N}$ and there exists a constant $L>0$ such that $\|f(\xs)-f(\ys)\|\leq L\|\xs-\ys\|$ for all $\xs,\ys \in \mathcal{N}$. A set is compact if it is closed and bounded. Let the state and input of a dynamical system be denoted by $\xs\in\mathcal{X}$ and $\boldsymbol{u}\in\mathcal{U}$, respectively, where $\mathcal{X}\subseteq\mathbb{R}^n$ and $\mathcal{U}\subseteq\mathbb{R}^m$ are the state space and the admissible set of control, respectively. We use the subscript $(\cdot)_t$ to denote the value of a term at time $t$. Moreover, a control affine system is given as,
\begin{equation}
    \small
    \label{eq:cont_affine}
    \dot{\xs}=f(\xs)+g(\xs)\boldsymbol{u},
\end{equation}
\noindent with locally Lipschitz $f:\mathbb{R}^n\to\mathbb{R}^n$ and $g:\mathbb{R}^n\to\mathbb{R}^{n\times m}$. 

\vspace{-1mm}
\begin{definition}\label{def:forward_invariant}
(Forward Invariance). A set $\mathcal{C}\in\mathbb{R}^n$ is forward invariant with respect to \eqref{eq:cont_affine} if for every $\xs_0 \in \mathcal{C},\ \xs_t \in\mathcal{C},\ \forall t$.
\end{definition}
\vspace{-2mm}

\subsection{Time-varying (Zeroing) Control Barrier Functions} \label{sec:TVZCBF}

Control barrier functions (CBFs) are used in \cite{ames2016control} to render the desired set of states forward invariant. That is, let $h_{CBF}:\mathcal{X}\subseteq\mathbb{R}^n\to\mathbb{R}$ be a differentiable function. If the system starts in the zero superlevel set $\mathcal{C}=\{\xs\in\mathcal{X}\ |\ h_{CBF}(\xs)\geq0 \}$, i.e, $\xs_0\in\mathcal{C}$, then the CBF constraints force the system to stay always inside the set $\mathcal{C}$.

The author of \cite{xu2018constrained}
uses time varying CBFs to ensure forward invariance in sets changing with time, i.e., $\mathcal{C}_t$. In this case, a differentiable function $\mathfrak{h}(\xs,t):\mathcal{X}\times I\to\mathbb{R}^n$ is defined as a time-varying control barrier function in the time domain $I\subseteq\mathbb{R}_{\geq0}$. The time-varying set $\mathcal{C}_t=\{\xs\in\mathcal{X}\ |\ \mathfrak{h}(\xs,t)\geq 0\}$  
is rendered forward invariant by the input $\boldsymbol{u}\in\mathcal{U}$, if $\xs_0\in\mathcal{C}_0$ and there exists a locally Lipschitz continuous class $\mathcal{K}$ function $\alpha:\mathbb{R}_{\geq 0}\to\mathbb{R}_{\geq0}$ such that 
\begin{equation}\label{eq:TVZCBF}
\small
    \sup_{\boldsymbol{u}\in\mathcal{U}} \frac{\partial\mathfrak{h}(\xs,t)}{\partial\xs}^T(f(\xs)+g(\xs)\boldsymbol{u})+\frac{\partial\mathfrak{h}(\xs,t)}{\partial t}+\alpha(\mathfrak{h}(\xs,t))\geq0,
\end{equation}
\noindent then $\xs_t\in\mathcal{C}_t,\ \forall t$. \textcolor{black}{In the remainder of this paper, we use $\mathcal{C}$ to denote $\mathcal{C}_t$ for the sake of simplicity, and $\mathcal{C}_i$ will denote the feasible set associated with the $i^{th}$ subtask.}

\vspace{-1mm}
\subsection{Signal Temporal Logic} \label{sec:stl}
Signal Temporal Logic (STL) \cite{maler} can express rich properties of time series, and we specify desired system behaviors with the following \textcolor{black}{expressive} STL syntax:
\vspace{-1mm}
\begin{equation}
\small
\setlength{\jot}{.8pt}
\label{eq:STL_fragment}
\begin{split}
\Phi &= \Phi_1\wedge\Phi_2\ |\ {\Phi_1\vee\Phi_2}\ |\ F_{[a,b]}\phi\ |\ G_{[a,b]}\phi\ |\ {\varphi_1U_{[a,b]}\varphi_2},\\
\phi &= \varphi\ |\ \neg\phi\ |\ F_{[c,d]}\varphi\ |\ G_{[c,d]}\varphi,\\
\varphi &= \mu\ |\ \neg\varphi \ |\ \varphi_1\wedge\varphi_2 \ |\ \varphi_1\vee\varphi_2,
\end{split}   
\end{equation}
\vspace{-2mm}

\noindent where $a,b,c,d\in\mathbb{R}_{\ge0}$ are time bounds with $b\geq a$, $d\geq c$\textcolor{black}{; $F_{[a,b]}$, $G_{[a,b]}$, $U_{[a,b]}$, $\wedge$, $\vee$, $\neg$ are finally (eventually), globally (always), until, conjunction, disjunction, and negation operators, respectively}; $\Phi$ is an STL specification, and $\mu$ is a predicate in inequality form, $\mu=p(\xs_t)\sim0$, with signal $\xs:\mathbb{R}_{\ge0} \to\mathbb{R}^n$, function $p:\mathbb{R}^n \to \mathbb{R}$, and $\sim\in\{\geq,\leq\}$. While the fragment in \eqref{eq:STL_fragment} allows for nested temporal operators, temporal operators cannot be in conjunction/disjunction with bare predicates. Still, the fragment is more expressive than that of other STL CBF literature (e.g., \cite{lindemann2018control}, \cite{xiao2021high}).

Let $\xs_t$ denote the value of $\xs$ at time $t$, and $\xs_{t',t''}$ represents the trajectory of the system within $[t',t'']\subseteq\mathbb{R}_{\geq0}$. 
The satisfaction of an STL specification by the part of the signal starting at $t$, i.e., $\xs_{t,T}$, is determined as follows:
\vspace{-1mm}
\begin{equation}
\small
\begin{split}
\xs_{t,T} &\vDash \mu \Longleftrightarrow p(\xs_t)\sim\, 0, \\
\xs_{t,T} &\vDash \neg\mu \Longleftrightarrow \neg\big(\xs_{t,T}\vDash \mu\big),\\
\xs_{t,T} & \vDash \Phi_1\wedge\Phi_2 \Longleftrightarrow \xs_{t,T}\vDash \Phi_1 \;\text{and}\; \xs_{t,T}\vDash \Phi_2,\\
\xs_{t,T} & \vDash \Phi_1\vee\Phi_2 \Longleftrightarrow \xs_{t,T}\vDash \Phi_1 \;\text{or}\; \xs_{t,T}\vDash \Phi_2,\\
\xs_{t,T} & \vDash \Phi_1 U_{[a,b]}\Phi_2\Longleftrightarrow \exists t'\in[t+a,t+b],\;\xs_{t',T}\vDash \Phi_2 \\&\hspace{2.5cm}\text{and}\; \forall t''\in[t,t'],\;\xs_{t'',T}\vDash \Phi_1, \\
\xs_{t,T} & \vDash G_{[a,b]}\Phi\Longleftrightarrow \forall t'\in[t+a,t+b],\;\xs_{t',T}\vDash \Phi, \\
\xs_{t,T} & \vDash F_{[a,b]}\Phi\Longleftrightarrow \exists t'\in[t+a,t+b],\;\xs_{t',T}\vDash \Phi.
\end{split}
\label{eq:stl_semantics}
\end{equation}

While $F_{[a,b]}\Phi$ implies that $\Phi$ must be true at least once within $[t+a,t+b]$, $G_{[a,b]}\Phi$ requires the continuous satisfaction of $\Phi$ within the same interval. Until operator $\Phi_1 U_{[a,b]}\Phi_2$, on the other hand, forces $\Phi_1$ to be true until $\Phi_2$ is achieved which happens at least once within $[t+a,t+b]$.

\textcolor{black}{The horizon of an STL specification $\Phi$, i.e., $hrz(\Phi)$, is defined as the minimum amount of time required to decide whether the formula is satisfied \cite{dokhanchi}. For instance, the formula $F_{[0,6]}G_{[0,3]}x\geq0$ has a horizon of $6+3=9$, and the formula $G_{[0,15]} x\geq 0 \wedge F_{[0,8]} x \geq 10$ has a horizon of $\max(15,8)=15$. The inductive equations to compute the horizon of an STL specification can be found in \cite{dokhanchi}. We consider the mission horizon to be equal to STL horizon as $T=hrz(\Phi)$.} 

\begin{definition}\label{def:DNF}
(Disjunctive Normal Form (DNF) \cite{baier2008}). 
A specification is said to be in DNF if it has the form of $\bigvee_i\bigwedge_j\mu_{i,j}$ as the disjunction of conjunctions of predicates.
\end{definition}

In this paper, the desired STL specification $\Phi$ expresses a set of subtasks \textcolor{black}{$\Phi_i$}, \textcolor{black}{each of which includes inner specifications $\varphi$ in DNF, according to \eqref{eq:STL_fragment}} as follows:

\begin{equation}
\small
\label{eq:structure}
    \Phi=\Phi_1\wedge\Phi_2\wedge\ldots\wedge\Phi_k.
    \vspace{-1mm}
\end{equation}

\begin{remark}\label{rem:negation_free}
Any STL specification can be rewritten in negation-free \textcolor{black}{form} \cite{ouaknine2008some}. Furthermore, we can rewrite any inner specification $\varphi_i$ in \eqref{eq:STL_fragment} in DNF \cite{baier2008}.
\end{remark}

For example, for the specification $\Phi_i=\neg G_{[0,T]} \big(x\leq 0\vee(y\leq0\wedge z\leq 0)\big)$ we denote its inner part as $\varphi_i=x\leq 0\vee(y\leq0\wedge z\leq 0)$, i.e., $\Phi_i=\neg G_{[0,T]}\varphi_i$. The specification can be rewritten as $\Phi_i=F_{[0,T]} \big((x\geq 0\wedge y\geq 0)\vee(x\geq 0\wedge z\geq 0)\big)$. In this study, we consider the STL specifications formed by the conjunction of such subtasks, $\Phi_i$, as in \eqref{eq:structure}.

\section{Problem Formulation}\label{sec:problem_formulation}
We use a family of dynamical systems such that $f(\xs)=\boldsymbol{0_n}$ and $g(\xs)=\boldsymbol{I}_{n}$ where $\boldsymbol{0_n}$ and $\boldsymbol{I}_{n}$ denote $n\times1$ vector of zeros and $n\times n$ identity matrix, respectively. That is, we require the system to be holonomic with equal number of controllable and total degrees of freedom, i.e., $n=m$. This renders the system 
\begin{equation}\label{eq:single_integrator}
\small
    \Dot{\xs}=\boldsymbol{u}.
\end{equation}
We also consider the admissible input set as 
\begin{equation} \label{eq:input_set}
\small
    \mathcal{U}=\{\boldsymbol{u}\in\mathbb{R}^n\ |\ \|\boldsymbol{u}\|\leq u_{\max}\},
\end{equation} 

\noindent where ${u}_{\max}\in\mathbb{R}_{\geq0}$ is the scalar maximum input.

\vspace{-1mm}
\begin{problem}\label{problem_1}
Given a dynamical system \eqref{eq:single_integrator}, find an optimal control law, $\boldsymbol{u}_{0,T}^*$, such that the resulting system trajectory $\xs_{0,T}$ satisfies the given STL specification $\Phi$ \textcolor{black}{with the minimum control effort}, as long as such a satisfaction is feasible with the initial state $\xs_0$ and the control bounds $\boldsymbol{u}\in\mathcal{U}$.
\end{problem}

\vspace{-3mm}
\section{Solution Approach}\label{sec:solution_approach}
The method in \cite{lindemann2018control} uses time-varying CBFs to satisfy \textcolor{black}{a limited fragment of STL specifications and do not consider} the actuation constraints. Our method also uses the time-varying CBFs but can accommodate richer STL specifications including, but not limited to, the ones requiring recurrence \textcolor{black}{via nested temporal operators} and/or containing conflicting requirements. 

We consider an STL specification $\Phi$ formed by the conjunction of subtasks $\Phi_i$ as in \eqref{eq:structure}. \textcolor{black}{We assume that} each subtask is associated with compact target set(s) of states. We will assign a primary CBF for each subtask $\Phi_i$ based on the actuation limits to keep the system in a feasible time-varying set $\mathcal{C}_i$ to achieve the inner specification \textcolor{black}{$\varphi$ on time \big(e.g., $\Phi_i=F_{[a,b]}(\varphi_i\vee\varphi_j)$\big)}. We will also introduce another (secondary) CBF incorporating all the subtasks that are not achieved yet (i.e., dual CBF formulation).
To ensure the reachability of the target set of each subtask, the secondary CBF will be again constructed based on the actuation limits.

\vspace{-1mm}
 \subsection{Control Barrier Functions with Actuation Limits}\label{sec:CBF_act}

\begin{definition}\label{def:signed_distance}
(Signed Distance \cite{vandenberghe2004convex}). Let $\xs\in\mathcal{X}$ be a point in metric space $\mathcal{X}$ and $\mathcal{P}\subseteq\mathcal{X}$ be a compact set in which the inner subtask $\varphi$ defined in \eqref{eq:STL_fragment} is satisfied, i.e., $\mathcal{P}=\{\xs\in\mathcal{X}\ \vert\ \xs \models \varphi\}$. Then, we define the
\begin{itemize}
    \item Distance from $\xs$ to $\mathcal{P}$ as $dist(\xs,\mathcal{P}):=\inf\limits_{\xs^*\in \mathcal{P}}\| \xs-\xs^*\|$,
    \item Depth of $\xs$ in $\mathcal{P}$ as $depth(\xs,\mathcal{P}):=dist(\xs,\mathcal{X}\setminus \mathcal{P})$,
    \item Signed Distance from $\xs$ to $\mathcal{P}$ to be
    \vspace{-1mm}
     \begin{equation}
     \small
     \label{eq:signed_dist}
 Dist(\xs,\mathcal{P}):=\left\{
\begin{array}{rl}
      depth(\xs,\mathcal{P}), & \textit{if}\ \xs\in \mathcal{P}, \\
      -dist(\xs,\mathcal{P}), & \textit{otherwise}. \\
\end{array} 
\right.
 \end{equation}
\end{itemize}
\end{definition}
\vspace{-1mm}
We adopt the robustness metric definition in \cite{fainekos2009robustness} which uses the distance to the target sets in the state space \eqref{eq:signed_dist}. We will calculate the robustness metrics pointwise over the time (i.e., static robustness) and use the CBFs to achieve the subtasks on time\footnote{\textcolor{black}{The distance function in Def. \ref{def:signed_distance} may not be differentiable at some point inside the target set $\mathcal{P}$. However, the subtask removal discussed in the next sections will prevent the system from reaching to such point similar to switch mechanism in \cite{lindemann2018control}.}} 
Several studies use the smooth approximations of $\min$ and $\max$ functions to define STL robustness metrics with (e.g., \cite{gilpin2020smooth}) and without (e.g., \cite{aksaray2016q,pant2017smooth}) soundness guarantee.
We use the following approximations to preserve the soundness by lower bounding the $\min$ and $\max$ functions:
\vspace{-2mm}
\begin{definition} \label{def:smooth_approx}
(Smooth approximations of $\min$ and $\max$ functions \cite{vandenberghe2004convex}).
\begin{equation}\label{eq:smooth_minmax}
\small
    \begin{split}
        \min_i \boldsymbol{c}(i)&\geq\widetilde{\min_i}\, \boldsymbol{c}(i)=-\frac{1}{\beta}ln\big(\sum\nolimits_i e^{-\beta \boldsymbol{c}(i)}\big),\\
        \max_i \boldsymbol{c}(i)&\geq\widetilde{\max_i}\, \boldsymbol{c}(i)=\frac{\sum\nolimits_i \boldsymbol{c}(i) e^{\beta \boldsymbol{c}(i)}}{\sum\nolimits_i e^{\beta \boldsymbol{c}(i)}}.\\
    \end{split}
\end{equation}
\noindent where \textcolor{black}{$\beta\in\mathbb{R}^+$} is a tuning coefficient whose higher values yield better approximations.
\end{definition}
\vspace{-1mm}
 For each subtask $\Phi_i$ in \eqref{eq:structure}, the inner part, $\varphi_i$, may include only disjunctions and conjunctions based on \eqref{eq:STL_fragment} and Remark \ref{rem:negation_free}. Using the smooth approximations of $\min$/$\max$ functions, we next define the static robustness semantics for the inner specifications, i.e., $\varphi$ in \eqref{eq:STL_fragment}, as \cite{fainekos2009robustness},
\begin{equation}
\small
\begin{split}
&\rho(\xs_t,\varphi)=Dist(\xs_t,\mathcal{P}), \\
&\rho(\xs_t,\varphi_{1}\wedge\varphi_{2}) =\widetilde{\min}\big(\rho(\xs_t,\varphi_{1}),\rho (\xs_t,\varphi_{2}) \big), \\
&\rho(\xs_t,\varphi_{1}\vee\varphi_{2}) =\widetilde{\max} \big(\rho(\xs_t,\varphi_{1} ) ,\rho  (\xs_t, \varphi _{2})\big). \\
\end{split}
\label{eq:robustness_degree}
\end{equation}

Note that \textcolor{black}{$\rho(\xs_t,\varphi)\geq0\Longleftrightarrow \xs_t\models\varphi$ and,} $\varphi$ can include a single predicate which depicts a compact set (e.g., $\varphi=\mu=x^2+y^2\leq r^2$) or conjunction of the predicates that can only form a compact set when applied together (e.g., $\varphi=x\geq3\wedge x\leq 5$). 

\textcolor{black}{Using the smooth approximations of $\min$ and $\max$ functions allows for the solution of Problem \ref{problem_1} via gradient-based techniques. Note that we want to see the robustness metric nonnegative (i.e., $\rho(\xs_t,\varphi)\geq0\Longleftrightarrow \xs_t\models\varphi$) in accordance with the outer temporal operators, e.g., always when the operator is globally ($G_{[a,b]}\varphi$) and at least once when it is finally ($F_{[a,b]}\varphi$). We will enforce the satisfaction of $\rho(\xs_t,\varphi)\geq0$ within the particular time windows, e.g., $t\in[a,b]$, via CBFs.}

Our time-varying CBF definitions depend on the remaining time $r$ \textcolor{black}{to achieve each subtask $\Phi_i$} (strictly decreasing with time, i.e., $r_{t}=r_{0}-t$). In this regard, we define a CBF for each subtask such that the system will always remain in a distance with the target set of states that can be visited ($\rho(\xs,\varphi_i)\geq0$ can be achieved) in at most $r_i$ time units. For the system \eqref{eq:single_integrator}, by using the bounding input ${u}_{\max}\in\mathbb{R}_{\geq0}$ in \eqref{eq:input_set} we can define a primary CBF for each subtask $\Phi_i$ as
\begin{equation}
\small
\label{eq:primary_CBF}
    \mathfrak{h}_i(\xs,t)=r_i+\frac{\rho(\xs,\varphi_i)}{{u}_{\max}},
    \vspace{-1mm}
\end{equation}

\noindent where we want our system to stay inside the set $\mathcal{C}_i=\{\xs\in\mathbb{R}^n\ |\ \mathfrak{h}_i(\xs,t)\geq 0\}$ all the time until $\Phi_i$ is satisfied.

\begin{example} Let an STL specification be given as
$\Phi=\Phi_1\wedge\Phi_2$ with $\Phi_1=F_{[0,5]}\big((x\geq2\wedge x\leq 3) \vee (x\geq5 \wedge x \leq6)\big)$ and $\Phi_2=G_{[10,15]}(x\geq 7 \wedge x\leq8)$ for $x\in\mathbb{R}$. Then $\rho(x,\varphi_1)=\widetilde{\max}\big(\widetilde{\min}(x-2,3-x),\widetilde{\min}(x-5,6-x)\big)$ and $\rho(x,\varphi_2)=\widetilde{\min}(x-7,8-x)$. We construct the primary CBFs for $\Phi_1$ and $\Phi_2$ as
$\mathfrak{h}_1(x,t)=r_1+\frac{\rho(x,\varphi_1)}{{u}_{\max}}$ and $\mathfrak{h}_2(x,t)=r_2+\frac{\rho(x,\varphi_2)}{{u}_{\max}}$, respectively, where $r_{1_0}=5$ and $r_{2_0}=10$.
\label{example_1}
\end{example}

Notice that $\mathcal{C}_i$ is constructed with more information than the target set, $\mathcal{P}$, by incorporating both the actuation constraints and the remaining time to reach the target. This enables the system to move freely, even away from the active target sets. Accordingly, the system can satisfy more complex specifications compared to the related work (e.g., \cite{lindemann2018control,xiao2021high}) which generally require continuous progress toward the target. For instance, we can specify conflicting requirements, e.g., $\Phi=F_{[0,5]}(x\geq-1 \wedge x\leq 0)\wedge F_{[0,5]}(x\geq3\wedge x\leq 4)$, that require movement to opposite directions (e.g., when $0<x_0<3$) as we discuss further in Sec. \ref{sec:benchmark_analysis}. 

\textcolor{black}{The only exception to the primary CBF definition in \eqref{eq:primary_CBF} is the disjunction of temporal operators. In this case, we define the primary CBF associated with the disjunction using the smooth approximation of $\max$ function in \eqref{eq:smooth_minmax} as follows:
\begin{equation}
\small
    \label{eq:primary_CBF_OR}
         \Phi_{\vee}=\bigvee\nolimits_i\Phi_i\Longleftrightarrow\mathfrak{h}_{\vee}(\xs,t)=\widetilde{\max}_i\,\mathfrak{h}_i(\xs,t),
\end{equation}
\noindent and $\Phi_{\vee}$ is treated as any other subtask in \eqref{eq:structure}.\\ 
\noindent\textbf{Extension of the approach:} The system \eqref{eq:single_integrator} with a scalar input bound $u_{\max}$ guarantees that if the primary CBF \eqref{eq:primary_CBF} is initially nonnegative, then the system can reach the target set on time. Note that the bound $u_{\max}$ does not change with time. That is, ensuring $\mathfrak{h}(\xs,t)\geq0$ within $t\in[0,r]$, forces the system to stay in the proximity of the target set such that in the worst case the set can be reached in at most $r-t$ time units (and is reached when $t=r$) with the maximal input policies. On the other hand, for more complex systems than \eqref{eq:single_integrator}, e.g., \eqref{eq:cont_affine}, we could have nonnegative CBFs over the time, if we had an overall bound on the derivative of each state and a system property making this bound nondecreasing with time (at least under maximal inputs). Then the primary CBF could preserve its nonnegativeness. For example, monotone and/or cooperative control systems \cite{angeli2003monotone}, can satisfy this under mild assumptions.} 

\vspace{-1mm}
\subsection{Control Barrier Functions with Sequential Subtasks}
In the presence of $k>1$ subtasks in conjunction as in \eqref{eq:structure}, the system should lie in the intersection of the superlevel sets of barrier functions as $\xs\in\mathcal{C}$ where $\mathcal{C}=\{\xs\in\mathbb{R}^n\ |\  \mathfrak{h}_i(\xs,t)\geq 0, \forall i\in\{1,\ldots,k\}\}$, i.e., $\mathcal{C}=\bigcap_i\mathcal{C}_i$ as emphasized in the STL CBF literature (e.g., \cite{xiao2021high,gundana2021event}). However, since each feasible set $\mathcal{C}_i$ shrinks with time, subtasks with conflicting requirements within the overlapping time windows may cause an empty intersection of the feasible sets. The relaxation of CBF constraints to avoid infeasibility can also cause violation of the \textcolor{black}{relaxed} specifications. 
For this reason, we propose a dual CBF formulation and introduce a secondary CBF to ensure that any $\Phi_i$ is achieved by taking into account the shrinkage of feasible sets associated with the remaining subtasks $\Phi_j$, $\forall j\in\{1,\ldots,k\}\setminus \{i\}$. 
Therefore, even if there is more time to achieve it, the system may go for $\Phi_i$ immediately in accordance with the remaining subtask deadlines. In the secondary CBF formulation, we utilize a sequence of subtasks which is feasible to achieve in that particular order (even it can be disobeyed during execution). A term in the sequence is removed once the associated subtask is satisfied. Next, we elaborate more on the construction of the sequence of subtasks.
\vspace{-2mm}
\begin{definition} \label{def:set_distance}
(Ordered Set Distance\footnote{\textcolor{black}{This metric is a form of Hausdorff distance with a particular order.}}). We define the distance between two compact sets $\mathcal{P}_i,\, \mathcal{P}_j$ with the given order as
\begin{equation}
\small
\label{eq:set_set_dist}
         Dist(\mathcal{P}_i,\mathcal{P}_j)=\sup\limits_{\xs\in \mathcal{P}_i}\ dist(\xs,\mathcal{P}_j).
\end{equation}
\end{definition}
\vspace{-2mm}
Intuitively, the worst-case scenario is considered in the definition of the set distance where the system starts at the farthermost point in $\mathcal{P}_i$ and travels to $\mathcal{P}_j$. For notational simplicity, by using the Defs. \ref{def:signed_distance} and \ref{def:set_distance} together with the input bound $u_{\max}$ and with a slight abuse of notation, we define 
\vspace{-1mm}
\begin{equation} \label{eq:d_dist}
\small
    \mathfrak{d}_i=-Dist(\xs,\mathcal{P}_i)/{u}_{\max}\ \text{and}\
    \mathfrak{d}_{i,j}=Dist(\mathcal{P}_i,\mathcal{P}_j)/{u}_{\max},
\vspace{-1mm}
\end{equation}
\noindent as the required time for the system to reach $i^{th}$ set under the maximum input and as the worst-case minimum travel time between \textcolor{black}{$i^{th}$ and $j^{th}$ sets}, respectively. 
\vspace{-2mm}
\begin{definition}\label{def:sequence}
(Subtask Sequence and Its Subsequences). Given an STL specification with $k$ subtasks as in \eqref{eq:structure}, the subtask sequence $\boldsymbol{S}$ is an order of the subtask indices $\{1,\ldots,k\}$ with $\vert \boldsymbol{S}\vert=k$. For example, consider $\Phi=\Phi_1\wedge\Phi_2\wedge\Phi_3\wedge\Phi_4$, then $\boldsymbol{S}=(1,4,3,2)$ is a subtask sequence for $\Phi$. 
Each term in the sequence $\boldsymbol{S}(i)$ has a corresponding compact target set $\mathcal{P}_{\boldsymbol{S}(i)}$ in which the inner part $\varphi_{\boldsymbol{S}(i)}$ of the subtask $\Phi_{\boldsymbol{S}(i)}$ is satisfied. A subsequence $\boldsymbol{S}'$ is obtained by removing one or more terms of $\boldsymbol{S}$ after the associated target sets are visited where $1\!\leq\!|\boldsymbol{S}'|\!\leq\! k$, and $|\boldsymbol{S}'|\!=\! k$ implies $\boldsymbol{S}'\!=\!\boldsymbol{S}$. For example, $(4,3,2)$, $(1,3,2)$, and $(1,3)$ are some subsequences of $\boldsymbol{S}$.
\end{definition}
\vspace{-1mm}
Let us start with a sequence of subtasks whose accomplishment in the given order implies the satisfaction of the STL specification. We next show that even if the order in $\boldsymbol{S}$ is violated during the execution (e.g., satisfying a later subtask in the sequence first and removing it), the resultant subsequence will still be feasible. 
\vspace{-1mm}
\begin{lemma} 
\label{lem:triangle} Consider an STL specification $\Phi=\Phi_1\wedge\ldots\wedge\Phi_k$ and a sequence of its subtasks $\boldsymbol{S}$ as in Def. \ref{def:sequence} together with the duration definitions in \eqref{eq:d_dist}.
If $ r_{\boldsymbol{S}(i)}\geq \mathfrak{d}_{{\boldsymbol{S}(1)}} + \sum_{l=2}^{i} \mathfrak{d}_{{\boldsymbol{S}(l-1)},{\boldsymbol{S}(l)}},\ \forall i\in \boldsymbol{S}$, then $ r_{{\boldsymbol{S}'(j)}}\geq \mathfrak{d}_{{\boldsymbol{S}'(1)}} + \sum_{l=2}^{j} \mathfrak{d}_{{\boldsymbol{S}'(l-1)},{\boldsymbol{S}'(l)}},\ \forall j\in \boldsymbol{S}'$, for any subsequence $\boldsymbol{S}'$.
\end{lemma}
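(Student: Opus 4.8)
The plan is to recast the hypothesis as a statement about \emph{cumulative worst-case reach times} along a sequence and then show that deleting terms can only decrease these times. Write $\boldsymbol{S}=(s_1,\ldots,s_k)$ with $s_i=\boldsymbol{S}(i)$, and set $T_i := \mathfrak{d}_{s_1}+\sum_{l=2}^{i}\mathfrak{d}_{s_{l-1},s_l}$, so the hypothesis reads $r_{s_i}\ge T_i$ for all $i$. A subsequence $\boldsymbol{S}'$ retains positions $i_1<i_2<\cdots<i_p$, i.e. $\boldsymbol{S}'(m)=s_{i_m}$; its cumulative time is $T'_m := \mathfrak{d}_{s_{i_1}}+\sum_{n=2}^{m}\mathfrak{d}_{s_{i_{n-1}},s_{i_n}}$. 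Since $r_{\boldsymbol{S}'(m)}=r_{s_{i_m}}\ge T_{i_m}$ by hypothesis, it suffices to prove the single inequality $T'_m\le T_{i_m}$ for every $m$; the desired conclusion $r_{\boldsymbol{S}'(m)}\ge T'_m$ then follows immediately.

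The key ingredient is a directed triangle inequality for the distance in Def.~\ref{def:set_distance}. First I would establish a point-to-set version: for any $\xs$ and compact sets $\mathcal{P}_a,\mathcal{P}_b$, $dist(\xs,\mathcal{P}_b)\le dist(\xs,\mathcal{P}_a)+Dist(\mathcal{P}_a,\mathcal{P}_b)$. This holds because for every $\ys\in\mathcal{P}_a$ the ordinary triangle inequality gives $dist(\xs,\mathcal{P}_b)\le\|\xs-\ys\|+dist(\ys,\mathcal{P}_b)\le\|\xs-\ys\|+Dist(\mathcal{P}_a,\mathcal{P}_b)$, and taking the infimum over $\ys\in\mathcal{P}_a$ yields the claim. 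The set-to-set version $Dist(\mathcal{P}_a,\mathcal{P}_c)\le Dist(\mathcal{P}_a,\mathcal{P}_b)+Dist(\mathcal{P}_b,\mathcal{P}_c)$ follows by the same estimate applied for each $\xs\in\mathcal{P}_a$ and then taking the supremum over $\xs$. Dividing by $u_{\max}$ (and using $-Dist(\xs,\mathcal{P}_i)\le dist(\xs,\mathcal{P}_i)$, so that $\mathfrak{d}_i\le dist(\xs,\mathcal{P}_i)/u_{\max}$, which also covers the degenerate case where $\xs$ already lies in a target set) translates these into $\mathfrak{d}_{s_{i_1}}\le \mathfrak{d}_{s_1}+\sum_{l=2}^{i_1}\mathfrak{d}_{s_{l-1},s_l}$, obtained by chaining the point-to-set inequality through $s_1,\ldots,s_{i_1}$, and $\mathfrak{d}_{s_{i_{n-1}},s_{i_n}}\le\sum_{l=i_{n-1}+1}^{i_n}\mathfrak{d}_{s_{l-1},s_l}$, obtained by chaining the set-to-set inequality through the skipped sets.

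With these bounds in hand I would sum them: the initial term is controlled by the original path up to position $i_1$, and each skipped jump $s_{i_{n-1}}\to s_{i_n}$ is bounded by the block of consecutive-segment times between positions $i_{n-1}$ and $i_n$ in the original sequence. Because these blocks partition the index range $\{i_1+1,\ldots,i_m\}$, the sums telescope to give $T'_m\le \mathfrak{d}_{s_1}+\sum_{l=2}^{i_m}\mathfrak{d}_{s_{l-1},s_l}=T_{i_m}$, closing the argument. I expect the main obstacle to be the directed triangle inequality for the non-symmetric distance $Dist(\cdot,\cdot)$: its $\sup$--$\inf$ structure makes the orientation of the estimates delicate, so one must apply the infimum over the intermediate set and the supremum over the source set in the correct order. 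The remaining bookkeeping — chaining through the skipped sets and checking that the deleted segments partition cleanly — is routine once that inequality is secured.
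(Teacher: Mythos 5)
Your architecture is essentially the paper's argument repackaged as one telescoping estimate instead of a case analysis: your set-to-set directed triangle inequality, chained through skipped interior terms, is exactly what the paper uses to handle removal of later subtasks, and your point-to-set bound for a deleted prefix plays the role of the paper's first case (where $\mathfrak{d}_{\boldsymbol{S}(1),\boldsymbol{S}(2)}\geq\mathfrak{d}_{\boldsymbol{S}(2)}$ for any $\xs\in\mathcal{P}_{\boldsymbol{S}(1)}$). The reduction to proving $T'_m\leq T_{i_m}$, both triangle inequalities and their proofs, and the partition/telescoping bookkeeping are all correct; in particular, when $i_1=1$ (the first term is retained) your argument is complete as written.

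The gap is in the prefix step, and it sits exactly where you flagged the "delicate" point. Your chain legitimately yields $\mathfrak{d}_{s_{i_1}}\leq dist(\xs,\mathcal{P}_{s_1})/u_{\max}+\sum_{l=2}^{i_1}\mathfrak{d}_{s_{l-1},s_l}$, but to reach the inequality you actually claim you must replace $dist(\xs,\mathcal{P}_{s_1})/u_{\max}$ by $\mathfrak{d}_{s_1}=-Dist(\xs,\mathcal{P}_{s_1})/u_{\max}$, which requires $dist(\xs,\mathcal{P}_{s_1})\leq-Dist(\xs,\mathcal{P}_{s_1})$ --- the \emph{reverse} of the inequality you cite, and false precisely when $\xs$ lies strictly inside $\mathcal{P}_{s_1}$ (there the left side is $0$ and the right side is $-depth(\xs,\mathcal{P}_{s_1})<0$). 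Your parenthetical that this "covers the degenerate case where $\xs$ already lies in a target set" is therefore backwards: the substitution goes the right way for the term being bounded ($s_{i_1}$) but the wrong way for the bounding term ($s_1$). Moreover the claimed bound is genuinely false in that case, not merely unproven: take $u_{\max}=1$, $\mathcal{P}_{s_1}$ the closed unit disk at the origin, $\mathcal{P}_{s_2}$ the annulus $\{\xs : 9\leq\|\xs\|\leq10\}$, and $\xs$ the origin; then $\mathfrak{d}_{s_1}=-1$, $\mathfrak{d}_{s_2}=9$, $\mathfrak{d}_{s_1,s_2}=9$, so $\mathfrak{d}_{s_2}\leq\mathfrak{d}_{s_1}+\mathfrak{d}_{s_1,s_2}$ fails ($9\not\leq8$), and indeed the lemma itself, read statically at such an interior state, fails (the premise permits $r_{s_2}=8$ while the conclusion demands $r_{s_2}\geq9$). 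What rescues the result is the convention under which the lemma is applied: a term is removed at the \emph{first visit} of its target set, so at the moment of comparison $\xs$ is on the boundary of $\mathcal{P}_{s_1}$ and $\mathfrak{d}_{s_1}=dist(\xs,\mathcal{P}_{s_1})/u_{\max}=0$; more generally your step is valid whenever $\mathfrak{d}_{s_1}\geq0$. (The paper's own proof makes the same tacit restriction when it drops $\mathfrak{d}_{\boldsymbol{S}(1)}$ from the right-hand side of the premise.) So your proof is repaired by stating and using the restriction $\mathfrak{d}_{s_1}\geq0$ explicitly, rather than claiming the interior case is handled.
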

\begin{proof}
\textbf{Case 1:} (Only the first term in the sequence, $\mathfrak{d}_{\boldsymbol{S}(1)}$, is removed, i.e., $\boldsymbol{S}'=\boldsymbol{S}\setminus\boldsymbol{S}(1)$.) The term $\mathfrak{d}_{\boldsymbol{S}(1)}$ is removed when the set of states associated with it, $\mathcal{P}_{\boldsymbol{S}(1)}$, is visited. We want to show that after the removal of $\boldsymbol{S}(1)$, $r_{\boldsymbol{S}(i)}\!\geq\! \mathfrak{d}_{\boldsymbol{S}(2)}\! +\! \sum_{l=3}^i \mathfrak{d}_{\boldsymbol{S}(l-1),\boldsymbol{S}(l)},\forall i\in \boldsymbol{S}\setminus\boldsymbol{S}(1)$ holds. It is given in the premise that $ r_{\boldsymbol{S}(i)}\!\geq\! \mathfrak{d}_{{\boldsymbol{S}(1)}}\! +\! \sum_{l=2}^{i} \mathfrak{d}_{{\boldsymbol{S}(l-1)},{\boldsymbol{S}(l)}}, \forall i\in \boldsymbol{S}$, then we have $ r_{\boldsymbol{S}(i)}\geq \sum_{l=2}^{i} \mathfrak{d}_{{\boldsymbol{S}(l-1)},{\boldsymbol{S}(l)}}=\mathfrak{d}_{{\boldsymbol{S}(1),\boldsymbol{S}(2)}}+\sum_{l=3}^{i} \mathfrak{d}_{{\boldsymbol{S}(l-1)},{\boldsymbol{S}(l)}}, \forall i\in \boldsymbol{S}$. Moreover, by the Defs. \ref{def:signed_distance} and \ref{def:set_distance}, we know $\mathfrak{d}_{\boldsymbol{S}(1),\boldsymbol{S}(2)}\!\geq\!\mathfrak{d}_{\boldsymbol{S}(2)}\!=\!-Dist(\xs,\mathcal{P}_{\boldsymbol{S}(2)})/{u}_{\max}, \forall\xs\in\mathcal{P}_{\boldsymbol{S}(1)}$. Hence, the condition in the premise implies $r_{\boldsymbol{S}(i)}\geq \mathfrak{d}_{\boldsymbol{S}(2)} + \sum_{l=3}^i \mathfrak{d}_{\boldsymbol{S}(l-1),\boldsymbol{S}(l)}, \forall i\in \boldsymbol{S}\setminus\boldsymbol{S}(1)=\boldsymbol{S}'$.\\
\textbf{Case 2:} (Later subtasks in the sequence are removed.) Let $\boldsymbol{S}_{-1}'$ be any subsequence of $\boldsymbol{S}\setminus\boldsymbol{S}(1)$ and $\boldsymbol{S}_{-1}''$ be any subsequence of $\boldsymbol{S}_{-1}'$. Then it follows from the triangle inequality that $\sum_{l=2}^{\vert\boldsymbol{S}_{-1}'\vert} \mathfrak{d}_{\boldsymbol{S}_{-1}'(l-1),\boldsymbol{S}_{-1}'(l)}\geq\sum_{l=2}^{\vert\boldsymbol{S}_{-1}''\vert} \mathfrak{d}_{\boldsymbol{S}_{-1}''(l-1),\boldsymbol{S}_{-1}''(l)}$, which implies $\mathfrak{d}_{{\boldsymbol{S}(1)}} +\sum_{l=2}^{\vert\boldsymbol{S}_{-1}'\vert} \mathfrak{d}_{\boldsymbol{S}_{-1}'(l-1),\boldsymbol{S}_{-1}'(l)}\geq\mathfrak{d}_{{\boldsymbol{S}(1)}} +\sum_{l=2}^{\vert\boldsymbol{S}_{-1}''\vert} \mathfrak{d}_{\boldsymbol{S}_{-1}''(l-1),\boldsymbol{S}_{-1}''(l)}$. Since $\boldsymbol{S}_{-1}'$ and $\boldsymbol{S}_{-1}''$ are arbitrary, the conditional statement in the premise holds.\\
\textbf{Case 3:} (The first and later subtask(s) are removed together.) It follows from the Cases 1 and 2 that any subsequence obtained by removing $\boldsymbol{S}(1)$ and one or more later subtasks in $\boldsymbol{S}$ together preserves the inequality in the premise.
\end{proof}

\vspace{-1.2mm}
We construct the secondary CBF candidates over a feasible sequence $\boldsymbol{S}$ of $k$ subtasks in \eqref{eq:structure} as follows: 
\vspace{-1mm}
\begin{equation}
\small
\label{eq:secondary_CBF}
\begin{split}
        \mathfrak{b}_i(\xs,t)=r&_{\boldsymbol{S}(i)}+ \frac{Dist(\xs,\mathcal{P}_{\boldsymbol{S}(1)})}{{u}_{\max}}\\& - \frac{\sum_{l=2}^{i} Dist(\mathcal{P}_{\boldsymbol{S}(l-1)},\mathcal{P}_{\boldsymbol{S}(l)})}{{u}_{\max}},\ \forall i\in \boldsymbol{S}\setminus\boldsymbol{S}(1),
        \end{split}
\end{equation}
\vspace{-4mm}

\noindent where $r_{\boldsymbol{S}(i)}$ is the remaining time to achieve $i^{th}$ subtask in the sequence $\boldsymbol{S}$. Nonnegativeness of $ \mathfrak{b}_i(\xs,t)$ for all $i$ simply implies that the system should achieve the first subtask in the order, $\Phi_{\boldsymbol{S}(1)}$, such that all the remaining subtasks can still be achieved on time. Therefore, in addition to staying inside the $\mathcal{C}$ (for subtask-specific achievement) we want our system to always be in $\mathcal{B}=\bigcap_i\mathcal{B}_i$ as well (for collective achievement) where $\mathcal{B}_i=\{\xs\in\mathbb{R}^n\ |\  \mathfrak{b}_i(\xs,t)\geq 0\}$. 

Note that the inner part $\varphi$ is in DNF (Def. \ref{def:DNF}).  The target set $\mathcal{P}_\varphi$ for each subtask in the sequence is determined as follows (unlike the primary CBFs which may contain multiple target sets). If $\varphi$ comprises conjunction, i.e., $\varphi=\bigwedge_i\varphi_i$, we select the target set assuming one subtask area contains another as $\mathcal{P}_\varphi=\arg\,\max_{\mathcal{P}_i}(Dist(\xs,\mathcal{P}_i))$. This assumption is not restrictive since we can always define a new compact target set by the intersection of the sets in conjunction. If the inner part $\varphi$ includes disjunction \big(e.g., $\varphi_1\vee(\varphi_2\wedge\varphi_3)$\big), we find multiple alternative sequences with each possible subtask with different $\varphi$'s (or conjunction of them) arising due to disjunction. Notice that this may increase the number of possible sequences to choose from exponentially when multiple subtasks are present containing disjunction operators. However, as will be discussed in Sec. \ref{sec:STL_CBF}, the sequence determination is implemented once and offline except the specifications requiring recurrence.

\vspace{-1mm}
\begin{continueexample}{example_1}
Consider the previous example with an initial state $x_0=3.5$. Let the set of alternative feasible sequences of subtasks $\Phi_1$ and $\Phi_2$ be $\{(\varphi_{11},\varphi_2),(\varphi_{12},\varphi_2)\}$ where $\varphi_{11}=x\geq2\wedge x\leq 3$, $\varphi_{12}=x\geq5 \wedge x \leq6$, and $\varphi_2=x\geq 7 \wedge x\leq8$. That is, one option is achieving $\varphi_{11}$ on time and then $\varphi_{2}$, and the other is satisfying $\varphi_{12}$ and $\varphi_{2}$, respectively. If we choose the second sequence $\boldsymbol{S}=(\varphi_{12},\varphi_2)$, in addition to the primary CBFs defined in the previous example, we have a secondary CBF enforcing: ``Satisfy $x\geq5 \wedge x \leq6$ so early that $x\geq 7 \wedge x\leq8$ can still be achieved on time." Then the secondary CBF constraint \eqref{eq:secondary_CBF} becomes $\mathfrak{b}(\xs,t)=r_{2}+ \frac{x-5}{{{u}}_{\max}}- \frac{ 7-5}{{{u}}_{\max}}\geq0$ which will be removed when $x=5$ is satisfied for the first time.
\end{continueexample}

\vspace{-1mm}
After determining the sequence of subtasks $\boldsymbol{S}$, we apply the secondary CBF constraint based on the most time-critical candidate in \eqref{eq:secondary_CBF} which is found as,
\begin{equation}
\vspace{-1.5mm}
\small
\label{eq:secondary_CBF_tightest}
        \mathfrak{b}(\xs,t)=\min_{i\in \boldsymbol{S}\setminus\boldsymbol{S}(1)}\; \;\mathfrak{b}_i(\xs,t).
\end{equation}

\begin{remark}\label{rem:b_implies_h}
We use the most time-critical secondary CBF $\mathfrak{b}(\xs,t)\geq 0$ which implies $\mathfrak{b}_{i}(\xs,t)\geq 0,\ \forall i\in \boldsymbol{S}\setminus\boldsymbol{S}(1)$. Moreover, nonnegativeness of a candidate $\mathfrak{b}_i(\xs,t)\geq 0$ implies $\mathfrak{h}_{\boldsymbol{S}(i)}(\xs,t)\geq 0$ as well (i.e., $r_{\boldsymbol{S}(i)}\geq \mathfrak{d}_{\boldsymbol{S}(i)}$) since $r_{\boldsymbol{S}(i)}\geq \mathfrak{d}_{\boldsymbol{S}(1)} + \sum_{l=2}^i \mathfrak{d}_{\boldsymbol{S}(l-1),\boldsymbol{S}(l)}\geq \mathfrak{d}_{\boldsymbol{S}(i)}$. Therefore, $\mathfrak{h}_{\boldsymbol{S}(1)}(\xs,t)$ is the only primary CBF to be applied by containing the smooth robustness metrics of Boolean connectives as in \eqref{eq:robustness_degree} unlike the secondary CBF components (sequence terms) with definite target sets. Whenever $\Phi_{\boldsymbol{S}(1)}$ is achieved, the primary CBF constraint will switch to the next subtask in the sequence.\end{remark}
\vspace{-1mm}

We define a quadratic cost for the system as \textcolor{black}{$\mathcal{J}(\boldsymbol{u}_t)=\boldsymbol{u}_t^T\boldsymbol{u}_t$}. Note that the CBF constraints in \eqref{eq:TVZCBF} are linear in control for a constant state. Therefore, if we could represent the STL satisfaction as CBF constraints, then the Problem \ref{problem_1} reduces to quadratic programming (QP) problems \textcolor{black}{with explicit input and CBF constraints} to be solved sequentially over the discrete time:
\begin{subequations} \label{eq:QP}
\small
\begin{align}
   \min_{\boldsymbol{u}\in\mathcal{U}}\; &\;\frac{1}{2}\boldsymbol{u}^T\boldsymbol{u}\\
    s.t.\; &\frac{\partial\mathfrak{h}_{\boldsymbol{S}(1)}}{\partial\xs}^T\boldsymbol{u}+\frac{\partial\mathfrak{h}_{\boldsymbol{S}(1)}}{\partial t}+\alpha_{\boldsymbol{S}(1)}(\mathfrak{h}_{\boldsymbol{S}(1)})\geq0, \label{eq:QP_a}\noeqref{eq:QP_a}\\
    &\frac{\partial\mathfrak{b}}{\partial\xs}^T\boldsymbol{u}+\frac{\partial\mathfrak{b}}{\partial t}+\alpha_{\mathfrak{b}}(\mathfrak{b})\geq0,\label{eq:QP_b}\noeqref{eq:QP_b}
\end{align}
\end{subequations}
where ${\partial (\cdot)}/{\partial t}=-{\partial (\cdot)}/{\partial r}$ since ${\partial r}/{\partial t}=-1$. Note that when a set of states associated with a subtask is visited for the first time, the term belongs to that subtask is removed from the sequence (therefore from the secondary CBF), and the primary CBF constraint \eqref{eq:QP_a} is switched to one that belongs to the next subtask in the sequence. In this regard, when there are no subtasks left in the sequence, we deduce that the STL specification $\Phi$ in \eqref{eq:structure} is satisfied.

\textcolor{black}{\noindent\textbf{Conservativeness of the approach:} The sequence of subtasks (Def. \ref{def:sequence}) is constructed via the ordered set distances (Def. \ref{def:set_distance}) which assume that a target set of states is reached through its farthest point with respect to the next target set. This may result in not having a feasible sequence $\boldsymbol{S}$ whereas the STL specification is actually achievable. The user may circumvent this by relaxing the secondary CBF constraint in \eqref{eq:QP_b}, and penalizing the relaxation variable in the objective function. 
After noting these, it is worth to reiterate that in this paper we assume STL specifications that have a feasible sequence $\boldsymbol{S}$, and CBF functions that are initially nonnegative (under ${u}_{\max}$) so that we could pursue forward invariance over their zero superlevel sets.}

\renewcommand{\arraystretch}{1.5}
\begin{table*}[t!]
\centering
\caption{Parametrization and implementation of the CBF constraints in accordance with the temporal operator of respective subtask and the associated time interval(s).}
\label{table:stl_cbf}
\begin{tabular}{|m{2.4cm}|m{1.6cm}|m{6cm}| m{6cm}|} 
 \hline
 Temporal Ops. of $\Phi_i$ & Initial Rem. Time, $r_i$ & Primary CBFs&Secondary CBF (Termwise)\\\hline
 $F_{[a,b]}\varphi_i$ & $r_i=b$& Once achieved within $[a,b]$, remove.& Remove with the associated Primary CBF.\\\hline $G_{[a,b]}\varphi_i$ & $r_i=a$& Once started, fix $r_i=\Delta t$, remove at $t=b$.& Remove at $t=a$.\\ \hline $F_{[a,b]}G_{[c,d]}\varphi_i$ & $r_i=b+c$& Once started at $t'\in[a+c,b+c]$ fix $r_i=\Delta t$, remove at $t=t'+d-c$.& Remove at $t=t'$.\\ \hline
 $G_{[a,b]}F_{[c,d]}\varphi_i$&$r_i=a+d$ & Each time achieved within $[a+c,b]$, reset $r_i=d$, once achieved within $[b+c,b+d]$, remove.&Each time $r_i$ is reset, reconstruct the sequence, remove with the associated Primary CBF.\\ 
 \hline
\end{tabular}
 \vspace{-4.5mm}
\end{table*}
\renewcommand{\arraystretch}{1}

\subsection{STL Control Synthesis under Dual CBF} \label{sec:STL_CBF}

We synthesize an STL controller using the primary CBFs in \eqref{eq:primary_CBF} for each subtask in \eqref{eq:structure} and a secondary CBF as in \eqref{eq:secondary_CBF} depicting a feasible sequence of subtasks, $\boldsymbol{S}$. In the QP problem \eqref{eq:QP}, we apply the primary CBF associated with the first subtask in the sequence which switches to the next when achieved. The secondary CBF, on the other hand, is constructed over the most time-critical portion of the sequence as in \eqref{eq:secondary_CBF_tightest}. The implementation of dual CBF can be summarized based on the type of the temporal operator(s) of each subtask $\Phi_i$ as follows: 
\begin{itemize}[leftmargin=*]
    \item If the operator is finally ($\Phi_i=F_{[a,b]}\varphi_i$), we apply the primary CBF and the associated term in the secondary CBF sequence until the inner subtask $\varphi_i$ is satisfied within the given time window $[a,b]$.
    \item If it is globally ($\Phi_i=G_{[a,b]}\varphi_i$) or finally-globally ($\Phi_i=F_{[a,b]}G_{[c,d]}\varphi_i$), once the subtask satisfaction begins, the associated term inside the secondary CBF is removed but we enforce the primary CBF constraint until the last time step $\varphi_i$ needs to hold by keeping the remaining time fixed to $r_i=\Delta t$, where $\Delta t$ is the time step between two consecutive solutions of \eqref{eq:QP}. Moreover, we add the length of time horizon (e.g., $b-a$ for $G_{[a,b]}\varphi_i$) to the next term in the sequence (if any) by subtracting the time system is supposed to spend in the target set for $\Phi_i$ while traversing it (due to the worst case set distance definition in Def. \ref{def:set_distance}). 
    Also note that this is the only case we may have two active primary CBFs: one for $\Phi_i$ with $r_i=\Delta t$ since it is removed from the sequence but is active within the globally time interval and one for the subtask next to $\Phi_i$ which becomes the first subtask in the sequence after $\Phi_i$ is removed. 
    \item If it is globally-finally ($\Phi_i=G_{[a,b]}F_{[c,d]}\varphi_i$), we treat it like finally, when the subtask is satisfied within the given time interval, we reset the remaining time and restart the process with a new sequence. 
    \item If it is until ($\Phi_i\!=\!\varphi_{i,1}U_{[a,b]}\varphi_{i,2}$) we use $\Phi_i=G_{[0,t']}\varphi_{i,1}\wedge F_{[a,b]}\varphi_{i,2}$ where $t'$ is the time of satisfaction of $\varphi_{i,2}$. 
    
\end{itemize}  
Details of the implementation can be found in Table \ref{table:stl_cbf}. 

\begin{theorem}
Given an STL specification $\Phi=\Phi_1\wedge\Phi_2\wedge\ldots\wedge\Phi_k$ with the syntax in \eqref{eq:STL_fragment} and a sequence $\boldsymbol{S}$ of the $k$ subtasks as defined in Def. \ref{def:sequence} such that 
 \vspace{-3mm}
\begin{equation}
\vspace{-1mm}
\small
    r_{\boldsymbol{S}(j)}\geq \mathfrak{d}_{\boldsymbol{S}(1)} + \sum_{l=2}^j \mathfrak{d}_{\boldsymbol{S}(l-1),\boldsymbol{S}(l)},\ \forall j\in \boldsymbol{S},
\end{equation}
where $r_i$ is the remaining time to achieve $\Phi_i$;  $\mathfrak{d}_i$ and $\mathfrak{d}_{i,j}$ are defined in \eqref{eq:d_dist}. Then the control law obtained by sequentially solving \eqref{eq:QP} in every $\Delta t$ time units yields a trajectory for the system \eqref{eq:single_integrator} that satisfies $\Phi$.

\end{theorem}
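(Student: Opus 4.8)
The plan is to reduce the claim to two independent facts: first, that at every time step the QP \eqref{eq:QP} is feasible and its constraints, when maintained, render the relevant superlevel sets forward invariant; and second, that repeated application of Lemma \ref{lem:triangle} lets me induct over the sequence $\boldsymbol{S}$, peeling off one satisfied subtask at a time until none remain. I would begin by verifying that each primary CBF \eqref{eq:primary_CBF} is a legitimate time-varying CBF for the single integrator \eqref{eq:single_integrator} in the sense of \eqref{eq:TVZCBF}.

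For the feasibility and invariance step, I would substitute $f(\xs)=\boldsymbol{0}_n$, $g(\xs)=\boldsymbol{I}_n$, $\partial\mathfrak{h}_i/\partial t=-1$ (since $r_{i,t}=r_{i,0}-t$), and $\partial\mathfrak{h}_i/\partial\xs=\nabla\rho(\xs,\varphi_i)/u_{\max}$ into \eqref{eq:TVZCBF}. The supremum over $\|\boldsymbol{u}\|\le u_{\max}$ of the first term equals $\|\nabla\rho\|$, and since $\rho$ is a signed-distance function its gradient has unit norm outside the target set (exactly the region where the system approaches), so the left-hand side reduces to $\|\nabla\rho\|-1+\alpha(\mathfrak{h}_i)=\alpha(\mathfrak{h}_i)\ge 0$ on the boundary $\mathfrak{h}_i=0$. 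This certifies the primary constraint and, by the same computation applied to \eqref{eq:secondary_CBF}, the secondary one. The key compatibility observation is that both $\mathfrak{h}_{\boldsymbol{S}(1)}$ and the state-dependent term of $\mathfrak{b}$ are driven by the distance to the single target set $\mathcal{P}_{\boldsymbol{S}(1)}$, so the maximal input pointing toward $\mathcal{P}_{\boldsymbol{S}(1)}$ satisfies both constraints simultaneously; hence the QP is never infeasible and, by \eqref{eq:TVZCBF}, $\xs_t\in\mathcal{C}_{\boldsymbol{S}(1)}\cap\mathcal{B}$ for all $t$.

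Next I would turn invariance into on-time satisfaction and set up the induction. Because $\mathfrak{h}_{\boldsymbol{S}(1)}\ge 0$ is maintained and $\mathfrak{h}_{\boldsymbol{S}(1)}=r_{\boldsymbol{S}(1)}+\rho/u_{\max}$, at the instant the deadline is reached ($r_{\boldsymbol{S}(1)}=0$) we must have $\rho(\xs,\varphi_{\boldsymbol{S}(1)})\ge 0$, i.e.\ $\xs\models\varphi_{\boldsymbol{S}(1)}$, so the first subtask is met within its window; Remark \ref{rem:b_implies_h} guarantees that enforcing the single active primary together with $\mathfrak{b}\ge 0$ suffices. Upon satisfaction the term $\boldsymbol{S}(1)$ is removed, producing a subsequence; Lemma \ref{lem:triangle} shows this subsequence still obeys the feasibility premise, so the whole argument repeats with the next subtask promoted to first. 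Inducting from $k$ remaining subtasks down to zero yields satisfaction of every $\Phi_i$ within its interval, and hence of $\Phi=\bigwedge_i\Phi_i$.

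The remaining work, and where I expect the main difficulty, is twofold. First, I must check the invariance argument for each temporal-operator type in Table \ref{table:stl_cbf}: for $G_{[a,b]}$ and $F_{[a,b]}G_{[c,d]}$ the remaining time is frozen at $r_i=\Delta t$ so that the primary CBF keeps the trajectory inside the target throughout the ``always'' window, while for $G_{[a,b]}F_{[c,d]}$ recurrence is handled by resetting $r_i$ and reconstructing $\boldsymbol{S}$ at each repetition, so the single-pass induction must be applied once per cycle; the until case is reduced to the conjunction $G_{[0,t']}\varphi_{i,1}\wedge F_{[a,b]}\varphi_{i,2}$. Second, the genuinely delicate point is the gap between the continuous-time invariance guarantee of \eqref{eq:TVZCBF} and the discrete, every-$\Delta t$ execution of the QP, together with the non-smoothness of $\rho$ and of the $\widetilde{\min}/\widetilde{\max}$ surrogates near set boundaries; I would argue that subtask removal (as in the footnote to Def. \ref{def:signed_distance}) keeps the trajectory in the region where $\|\nabla\rho\|=1$, so the feasibility certificate never degrades, and treat the discretization in the usual CBF manner (exact in the $\Delta t\to 0$ limit, with the frozen and reset bookkeeping of Table \ref{table:stl_cbf} accounting for the worst-case set-distance slack).
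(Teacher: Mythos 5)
Your overall architecture is the same as the paper's: show that the QP constraints keep the two CBFs nonnegative, convert that invariance into on-time satisfaction (at $r_{\boldsymbol{S}(1)}=0$ nonnegativity of $\mathfrak{h}_{\boldsymbol{S}(1)}$ forces $\rho\ge 0$), peel off satisfied subtasks via Lemma \ref{lem:triangle}, and finish with the operator-by-operator analysis of Table \ref{table:stl_cbf}. The divergence is in the central feasibility/invariance step, and that is where your argument has a genuine gap. You certify \eqref{eq:TVZCBF} by computing $\sup_{\|\boldsymbol{u}\|\le u_{\max}}\frac{1}{u_{\max}}\nabla\rho^{T}\boldsymbol{u}=\|\nabla\rho\|$ and then asserting $\|\nabla\rho\|=1$ outside the target set. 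That holds only when $\rho$ is the signed distance to a single set. The theorem covers inner specifications in DNF, so $\rho$ in \eqref{eq:primary_CBF} is built from the smooth $\widetilde{\min}/\widetilde{\max}$ operations of \eqref{eq:smooth_minmax} and \eqref{eq:robustness_degree}; its gradient is then a weighted average of the individual unit gradients and generically has norm strictly less than one. For instance, outside a corner of a box target $x\ge 0\wedge x\le 1\wedge y\ge 0\wedge y\le 1$, the two active gradients are orthogonal and the smooth-min gradient has norm close to $1/\sqrt{2}$. At a state where $\mathfrak{h}_{\boldsymbol{S}(1)}=0$ your certified left-hand side is then $\|\nabla\rho\|-1<0$, so the verification collapses precisely in the cases this theorem is meant to add over prior work. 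Your proposed repair (subtask removal keeps the trajectory where $\|\nabla\rho\|=1$) does not apply: the removal mechanism in the footnote to Def.~\ref{def:signed_distance} addresses non-differentiability \emph{inside} the target set, not the degraded gradient norm outside it caused by Boolean composition. The paper's proof avoids gradients altogether: it invokes Thm.~1 of \cite{lindemann2018control} for invariance given that \eqref{eq:QP_a} and \eqref{eq:QP_b} hold, and argues feasibility metrically---initial nonnegativity plus soundness of the smooth approximations gives $r_{t'}u_{\max}\ge\|\xs^{*}-\xs_{t'}\|$ for a boundary point $\xs^{*}$ of the target set, and the triangle inequality shows that maximal-speed motion preserves reachability of $\xs^{*}$ within the remaining time, so nonnegativity can be sustained.

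Second, your ``key compatibility observation''---that $\mathfrak{h}_{\boldsymbol{S}(1)}$ and the state-dependent term of $\mathfrak{b}$ are both driven by the distance to the single set $\mathcal{P}_{\boldsymbol{S}(1)}$, so one maximal input satisfies \eqref{eq:QP_a} and \eqref{eq:QP_b} simultaneously---contradicts Remark \ref{rem:b_implies_h}: the primary CBF carries the smooth robustness of the Boolean connectives (possibly several disjunct sets), whereas each sequence term in \eqref{eq:secondary_CBF} has one definite target set. The paper's own case study makes this concrete: for \eqref{eq:case_study_complex} the sequence lists region $6$ while the primary CBF drives the system to region $5$, so the two gradients need not be aligned and your joint-feasibility claim requires a different justification. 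To be fair, the paper's proof is itself terse on joint feasibility, but it never relies on the alignment you assert; if you want a self-contained verification of \eqref{eq:TVZCBF} in place of the paper's citation-plus-reachability route, you must handle the composite-$\rho$ gradient and the primary/secondary misalignment explicitly.
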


\begin{proof}
Consider two time-varying CBFs nonnegative at time $t'$, e.g., $\mathfrak{h}(\xs_{t'},{t'})\geq0$ in \eqref{eq:primary_CBF} and $\mathfrak{b}(\xs_{t'},{t'})\geq0$ in \eqref{eq:secondary_CBF_tightest}, then by the Thm. 1 in \cite{lindemann2018control}, the control law $\boldsymbol{u}$ satisfying the constraints in \eqref{eq:QP_a} and \eqref{eq:QP_b} (i.e., $\boldsymbol{u}$ being inside the set $S_{\boldsymbol{u}}$ in \cite{lindemann2018control}) implies $\mathfrak{h}(\xs_t,t)\geq0$ and $\mathfrak{b}(\xs_t,t)\geq0,\ \forall t\geq t'$. Let us now show that it is feasible to keep them nonnegative. Let the target set be $\mathcal{P}$ and $\xs^*\in\partial \mathcal{P}$, then the initially nonnegative CBFs imply $r_{t'}u_{\max}\geq\|\xs^*-\xs_{t'}\|$ where $t'$ is the first time the target set is switched to $\mathcal{P}$. Note that the system \eqref{eq:single_integrator} is capable to move according to $\|\xs_t-\xs_{t'}\|=u_{\max}(r_{t'}-r_{t}),\ \forall t\geq t'$. Then we obtain $\|\xs_t-\xs_{t'}\|+r_{t} u_{\max}\geq\|\xs^*-\xs_{t'}\|$, i.e., at any time $t\in[t',t'+r_{t'}]$ the system can reach $\xs^*$ under $u_{\max}$, and it is feasible to sustain the nonnegativeness of the CBFs. The secondary CBF $\mathfrak{b}(\xs,t)$ is the most time-critical one among the candidates. 
Therefore, if it is nonnegative, then all the candidates in \eqref{eq:secondary_CBF} will be nonnegative as well. As discussed in the Remark \ref{rem:b_implies_h}, this implies the nonnegativeness of the primary CBFs except the one belongs to the $\Phi_{\boldsymbol{S}(1)}$ whose constraint is separately defined in \eqref{eq:QP_a}. Now let us consider the possible temporal operators:\\
\noindent \textbf{Case 1:} (Only $F_{[a,b]}\varphi$ among the subtasks). 
By Lemma \eqref{lem:triangle}, the new sequence $\boldsymbol{S}'=\boldsymbol{S}\setminus\boldsymbol{S}(j)$ obtained after a subtask $\Phi_{\boldsymbol{S}(j)}$ is satisfied and removed is also feasible as the secondary CBF remains nonnegative. Moreover, since $r_{i}$ decreases for all $i$ and $\mathfrak{b}(\xs,t)$ stays nonnegative (as well as $\mathfrak{b}_i(\xs,t)$'s and the associated primary CBFs as discussed in Remark \ref{rem:b_implies_h}), each subtask $\Phi_{i}$ is achieved and removed from $\boldsymbol{S}$ in  at most $r_{i}$ time units. This process continues until no subtask is left in the sequence $\boldsymbol{S}$.\\ 
\noindent \textbf{Case 2:} ($G_{[a,b]}F_{[c,d]}\varphi$). The subtask is treated as $F_{[c,d]}\varphi$ by resetting the remaining time once $\varphi$ is satisfied as explained in Table \ref{table:stl_cbf}. The guarantees given in Case 1 continues until the satisfaction of $\varphi$ that needs to be achieved repetitively. At each satisfaction, a new sequence is found by placing $F_{[c,d]}\varphi$ accordingly and applied as in Case 1.\\
\noindent \textbf{Case 3:} ($G_{[a,b]}\varphi$ or $F_{[a,b]}G_{[c,d]}\varphi$). The remaining time for the globally operator is defined to be the initial time of its interval, e.g., $r_{G}=a$. Once the satisfaction starts, the remaining time is kept at $r_{G}=\Delta t$ until the end of globally time interval. This ensures the system to stay inside the target set of the subtask. Moreover, the absence of globally subtask in the secondary CBF while it is still active does not undermine the satisfaction since the time required to hold $\varphi$ is already considered in the sequence construction.
\end{proof}

\vspace{-4mm}
\begin{remark}\label{rem:sequence_selection}
A satisfying sequence $\boldsymbol{S}$ can be found (if any) by assessing the feasibility of all the possible orders of subtasks and making a selection based on some user defined criteria.  
\end{remark}

\textcolor{black}{In the simulations, we use the sequence that cumulatively has the highest difference between the remaining and required times among the alternatives feasible sequences.}

\section{Simulations}\label{sec:simulations}
We develop a control synthesis tool that generates trajectories satisfying the given STL specification \eqref{eq:structure} by solving \eqref{eq:QP} sequentially with the \texttt{quadprog} solver in MATLAB R2021a. A laptop computer with 1.8 GHz, Intel Core i5 processor is used to run the simulations.
\vspace{-2mm}

\color{black}\subsection{Benchmark Analysis}
\label{sec:benchmark_analysis}

Inclusion of actuation limits in the formulation of CBFs and considering the sequential satisfaction of STL subtasks have multiple advantages. We will show these benefits compared to the existing methods with the dynamics of $\dot{x}={u}\in\mathbb{R}$. Note that for each specification $\Phi$ in this subsection, we use $\Phi'$ in our method, which is obtained by bounding the predicates in $\Phi$ for the sake of compact target sets as our method requires. For example, in \eqref{eq:conflict_exmpl}, we apply $x\geq10\wedge x\leq11$ instead of $x\geq 10$, and $x\leq5\wedge x\geq4$ instead of $x\leq 5$. This neither undermines nor facilitates the accomplishment of the subtask (since $5\leq x_0\leq 10$). The original specifications are used for the other methods in the benchmark analysis as required.

First of all, \cite{lindemann2018control} mandates the achievement of fixed way points between the initial state and the target by using time-varying CBFs. Such an approach demands continuous progress toward the target and results in infeasibility when multiple targets are active at different directions. In \cite{xiao2021high}, such conflicting subtasks are tried to be handled via relaxations with finite-time convergent CBFs (causing delays in the satisfaction and potentially violation of the original specification). 
For example, consider an STL specification and its bounded equivalent under the initial condition $x_0=8$,
\begin{equation}
\small
\begin{split}
        \label{eq:conflict_exmpl}
    \Phi&=\Phi_1\wedge\Phi_2=F_{[0,5]}x\geq10\wedge F_{[1,6]}x\leq5,\\
    \Phi'\!&=F_{[0,5]}(x\geq10\wedge x\leq11)\wedge F_{[1,6]}(x\leq5\wedge x\geq4).
    \end{split}
\end{equation}

In \cite{xiao2021high}, the CBF associated with $\Phi_2=F_{[1,6]}x\geq5$ is relaxed since its time window is later than $\Phi_1=F_{[0,5]}x\geq10$. However, this relaxation yields the violation of $\Phi_2$ as Fig.~\ref{fig:conf} depicts. The dual CBF we propose, on the other hand, achieves both specifications on time by checking if it is feasible to go to first $x\leq5$, then $x\geq10$, and vice versa within the allowed time.  

\begin{figure}[htb!]
    \centering
    \includegraphics[trim={68 0 65 15},clip,width=.5\textwidth]{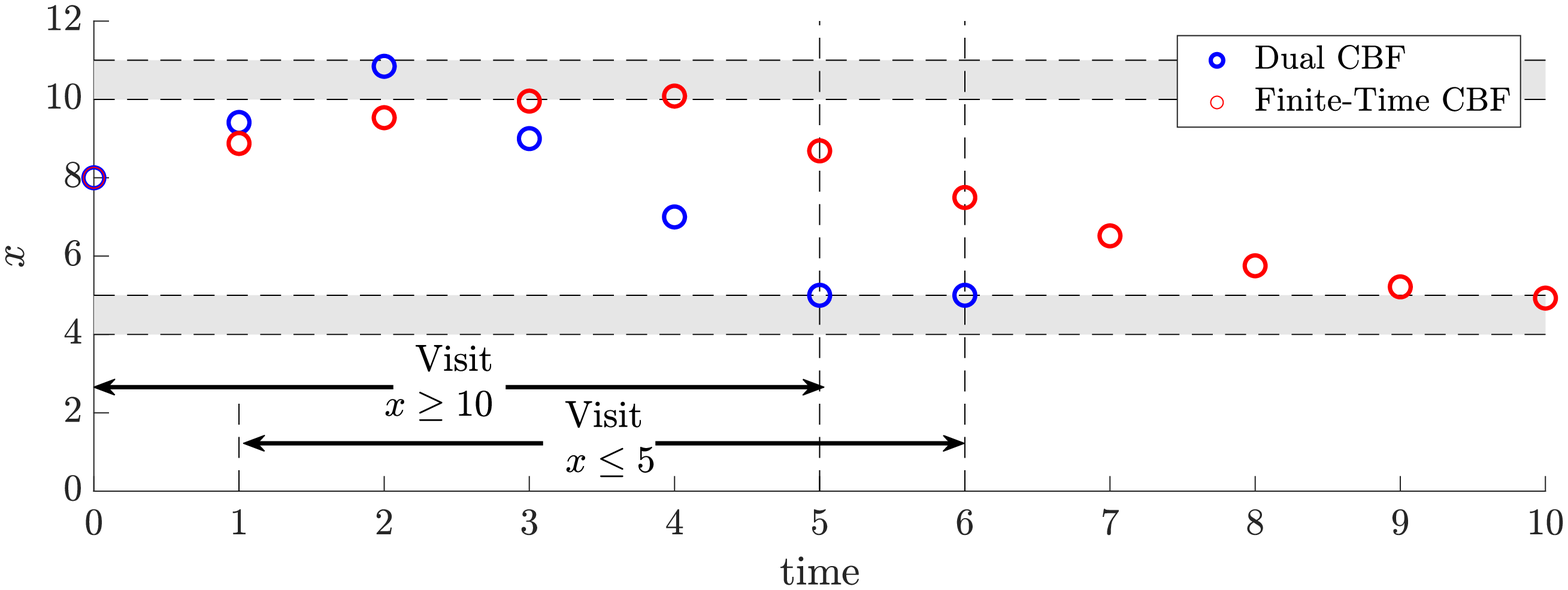}
    \caption{\textcolor{black}{Comparison of the two CBF approaches for the STL specifications $\Phi$ and $\Phi'$ in \eqref{eq:conflict_exmpl}.}}
    \label{fig:conf}
\end{figure}

The incapability to define and achieve conflicting specifications as in the above case results in failure for the periodic subtasks as well. Consider the STL specification and its bounded equivalent with $x_0=7$ given below,
\begin{equation}
\vspace{-.5mm}
\small
\begin{split}
    \label{eq:nested_exmpl}
    \Phi&=G_{[0,20]}F_{[0,10]}x\geq10 \wedge F_{[0,15]}x\leq5\wedge F_{[20,30]}x\leq3,\\
    \Phi'\!&=G_{[0,20]}F_{[0,10]}(x\geq10\wedge x\leq11) \wedge F_{[0,15]}(x\leq5\wedge x\geq4)\\
    &\hspace{30mm}\wedge F_{[20,30]}(x\leq3\wedge x\geq2).
    \end{split}
\end{equation}
    \vspace{-6mm}
\begin{figure}[htb!]
    \centering
    \includegraphics[trim={68 0 65 15},clip,width=.5\textwidth]{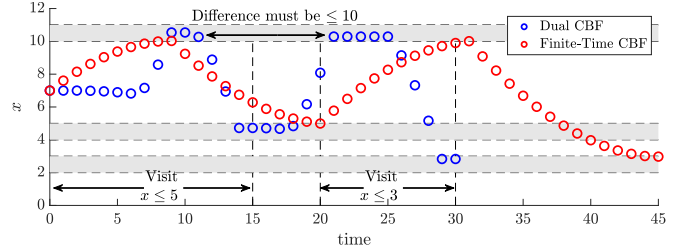}
    \caption{\textcolor{black}{Comparison of the two CBF approaches for the STL specifications $\Phi$ and $\Phi'$ in \eqref{eq:nested_exmpl}.}}
    \label{fig:nested}
    \vspace{-5mm}
\end{figure}

In general, recursive task definitions are missing in the STL CBF literature. Still, by giving the order of satisfaction for the predicates as a priori in an ad-hoc way, the approach in \cite{xiao2021high} may generate a trajectory that visits desired regions repetitively. But it fails to do so on time again even when the relaxation is applied (and the order of relaxations is predetermined). Figure~\ref{fig:nested} depicts this case while the dual CBF approach we propose generates a trajectory that repetitively visits $x=10$ while achieving other subtasks on time. Another advantage of the proposed dual CBF approach is better handling the disjunction operator. 
To illustrate, consider the below STL specification with its equivalent under $x_0=5$,
\begin{equation}
\small
\begin{split}
        \label{eq:disj_exmpl}
    \Phi&=\Phi_1\wedge\Phi_2=F_{[10,15]}x\geq9\wedge\big(F_{[0,5]}x\leq3\vee F_{[0,5]}x\geq7.5\big),\\
    \Phi'\!&=F_{[10,15]}(x\geq9\wedge x\leq10)\wedge\big(F_{[0,5]}(x\leq3\wedge x\geq2)\\
    &\hspace{45mm}\vee F_{[0,5]}(x\geq7.5\wedge x\leq8.5)\big).
    \end{split}
\end{equation} 
    \vspace{-4mm}
    
As Fig.~\ref{fig:disj} represents, for \cite{lindemann2018control}\footnote{The work in \cite{lindemann2018control} normally does not consider disjunction operator and preserves soundness. Hence, we added a sound disjunction operator (by using the smooth approximation of $\max$ in Def. \ref{def:smooth_approx}) to enrich the comparison.} and \cite{xiao2021high}, applying the disjunction yields to achieve the closer alternative. However, this may cause an unnecessary delay to accomplish other subtasks (e.g., $\Phi_1$ in \eqref{eq:disj_exmpl}) and an additional input cost.

\begin{figure}[htb!]
\vspace{-4mm}
    \centering
    \includegraphics[trim={68 0 65 10},clip,width=.5\textwidth]{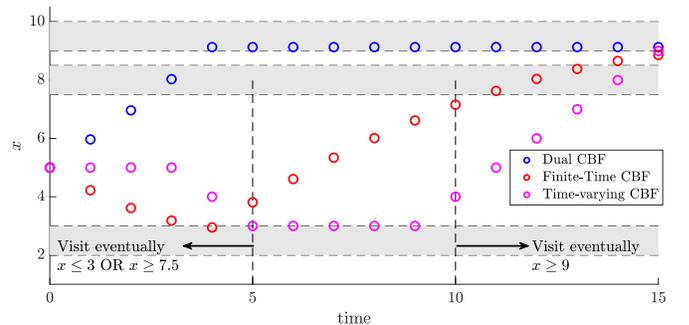}
    \caption{\textcolor{black}{Comparison of the three CBF approaches for the STL specifications $\Phi$ and $\Phi'$ in \eqref{eq:disj_exmpl}.}}
    \label{fig:disj}
    \vspace{-4mm}
\end{figure}

    \vspace{-2mm}
\subsection{Case Study}\label{sec:case_study}
\color{black}
We \textcolor{black}{also} consider a more complex scenario to illustrate the capabilities of the proposed approach\textcolor{black}{, which cannot be defined using other STL CBF approaches (e.g. \cite{lindemann2018control,xiao2021high}),} with $\xs=[x\ y]^T$, $\boldsymbol{u}=[{u}_x\ {u}_y]^T$, and $\|\boldsymbol{u}\|\leq1$. The STL specification the system is required to achieve is 
\begin{equation}\label{eq:case_study_complex}
\small
\begin{split}
\Phi&=\Phi_1\wedge\Phi_2\wedge\Phi_3\wedge\Phi_4\wedge\Phi_5\\
&=G_{[0,15]}F_{[0,10]}\ Region_1 \wedge F_{[0,15]}\ Region_2 \wedge F_{[0,15]}\ Region_3 \\ 
&\wedge F_{[0,40]}G_{[0,10]}\ Region_4 \wedge G_{[38,45]}\ \big(Region_5\vee Region_{6}\big),
\end{split}
\end{equation}
\vspace{-3mm}

\noindent where $Region_i=\{x,y\in\mathbb{R}\ |\ (x-\boldsymbol{x'}(i))^2+(y-\boldsymbol{y'}(i))^2\leq \boldsymbol{rad}(i)^2\}$, and we use $\boldsymbol{rad}=[1.5,0.5,1,1,0.75,0.75]$, $\boldsymbol{x'}=[7,2,12,7,11,3]$, and $\boldsymbol{y'}=[6,5,5,2,2,2]$ for $i=1,\ldots,6$. Note that the subtasks defined on the regions $1-4$ have conflicts, and the last two substasks may have overlapping requirements as well (depending on when region $4$ is visited).

\begin{figure}[htb!]
    \centering
    \includegraphics[trim={38 6 40 20},clip,width=.5\textwidth]{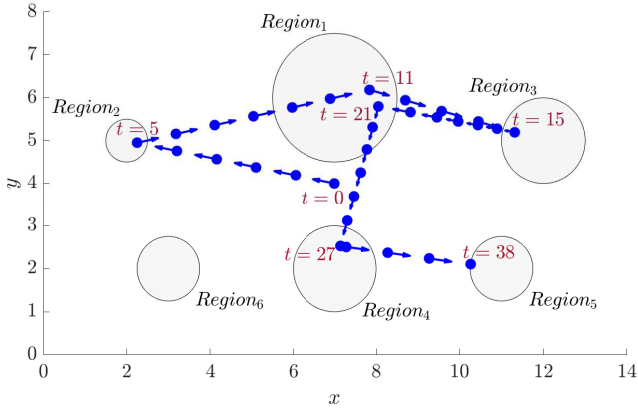}
    \caption{Trajectory of the system satisfying the STL specification in \eqref{eq:case_study_complex}.}
    \label{fig:case_study_complex}
\end{figure}

A new sequence is calculated only once at $t=11$. \textcolor{black}{Moreover, while region $6$ is listed in the new sequence as the last one to be visited, the primary CBF corresponding to $\Phi_5$ in \eqref{eq:case_study_complex} causes the system to visit region $5$ instead as  Fig.~\ref{fig:case_study_complex} depicts. This is because of the robustness metrics inside the primary CBF which accommodates the redundancy provided by the disjunction operator.} 
\color{black}The primary and secondary CBF values that are required to be nonnegative throughout the mission are shown in Fig.~\ref{fig:cbf_values}. Note that while the primary CBF's are switched as the substasks are accomplished, the secondary CBF only loses term as the regions are visited.

\begin{figure}[htb!]
    \centering
    \includegraphics[trim={68 7 65 13},clip,width=.5\textwidth]{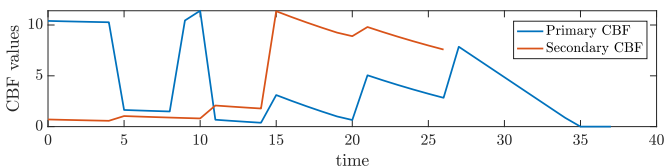}
    \caption{\textcolor{black}{The change in the values of primary \eqref{eq:primary_CBF} and secondary \eqref{eq:secondary_CBF_tightest} control barrier functions which are required to be always nonnegative.}}
    \label{fig:cbf_values}
\end{figure}
\color{black}

Finally, we compare our results for the STL specification in \eqref{eq:case_study_complex} with the common STL control synthesis methods that include the solution of a mixed-integer quadratic program\footnote{Linearity requirement in the constraints of the MIQP problem is met by specifying the subtask regions as inner-fitted squares to the circular subtask areas in the original specification \eqref{eq:case_study_complex}.} (MIQP) as in \cite{raman} and a nonlinear program (NLP) formulated with the smooth STL robustness metrics \cite{pant2017smooth}. Table \ref{tab:results} shows that our approach provides a time-efficient solution for a reasonable cost of additional input. Furthermore, compared to the MIQP and NLP approaches which are defined through a horizon, the sequential implementation of the QP in \eqref{eq:QP} facilitates the real-time applicability with the stepwise solutions in order of milliseconds.

\renewcommand{\arraystretch}{1.14}
\begin{table}[htb!]
\vspace{-2mm}
    \centering
        \caption{Comparison of the results with the popular control synthesis approaches for the STL specification in \eqref{eq:case_study_complex}.}
\begin{tabular}{c||c|c|c}
 Solution Method& \begin{tabular}{@{}c@{}}Dual CBF\end{tabular} & \begin{tabular}{@{}c@{}}Smooth Rob.\\ Metrics (NLP)\end{tabular} & \begin{tabular}{@{}c@{}} MIQP Encoding\end{tabular}\\\hline
    \begin{tabular}{@{}c@{}}Total Cost\\ $\sum_t\mathcal{J}
       ^*$ $[-]$\end{tabular} &$20.29$ &$14.65$ &$17.03$\\\hline
     \begin{tabular}{@{}c@{}}Solution\\ Time $[s]$\end{tabular} &\begin{tabular}{@{}c@{}}$0.48$\\(Avg.\\ $0.013\ ms$)\end{tabular} &$2.29$ &$9.24$ \\
    \end{tabular}
    \label{tab:results}
\end{table}
\renewcommand{\arraystretch}{1}
\vspace{-6mm}
\section{Conclusion}\label{sec:conclusion}
In this paper, we proposed novel CBF formulations to satisfy a rich family of STL specifications accommodating nested temporal operators, specifications with conflicting requirements, and Boolean connectives inside the temporal operators. The proposed dual CBF formulations include the actuation limits and the satisfiability of multiple subtasks in the mission. Consequently, the time-varying feasible sets the system has to reside inside are constructed by the worst-case scenarios (e.g., travelling under maximum input policy). This provides redundancy to the system in achieving subtasks and yields to satisfy a rich family of specifications in a computationally efficient manner.


\bibliography{arxiv_submission}
\vspace{-6mm}

\end{document}